\newcommand{\ket}[1]{\left|#1\right\rangle}
\newcommand{\cket}[1]{\left|\widetilde{#1}\right\rangle}
\newcommand{\bra}[1]{\left\langle #1\right|}
\newcommand{\ketbra}[1]{\ket{#1}\bra{#1}}
\newcommand{\bracket}[2]{\langle #1|#2\rangle}
\newtheorem{theorem}{Theorem}
\newtheorem{corollary}{Corollary}
\begin{document}
\author{Joseph M.~Renes}
\affiliation{Institut f\"ur Angewandte Physik, Technische Universit\"at Darmstadt, Hochschulstr.~4a, 64289 Darmstadt, Germany}

\title{Approximate Quantum Error Correction via Complementary Observables}

\begin{abstract}
The breakthrough of quantum error correction brought with it the picture of quantum information as a sort of combination of two complementary types of classical information, ``amplitude'' and ``phase''. Here I show how this intuition can be used to construct two new conditions for approximate quantum error correction. The first states that entanglement is locally recoverable from a bipartite state when one system can be used to approximately predict the outcomes of two complementary observables on the other. The second, more in the spirit of the recent decoupling approach, states that entanglement is locally recoverable when the environment cannot reliably predict either.\\  
This paper has been superceded by \href{http://arxiv.org/abs/1605.01420}{{\color{blue} arXiv:1605.01420}}.
\end{abstract}

\maketitle

In recent years \emph{decoupling} has emerged as the standard approach to constructing quantum information processing protocols. Originally due to Schumacher and Westmoreland~\cite{schumacher_approximate_2002}, decoupling gives a simple condition on when local operations on a bipartite system $\psi^{AB}$ are sufficient to transform it into a maximally entangled state, stating that this is possible when one of the systems, $A$ say, is approximately uncorrelated with the environment $R$ and nearly in a completely random state, in the sense that $\psi^{AR}\approx \frac{1}{d}\mathbbm{1}^A\otimes\psi^R$. Here the environment refers to the system purifying $\psi^{AB}$, so that $\ket{\psi}^{ABR}$ is a pure state, while $\psi^{AR}$ denotes the marginal state of $AR$. Treating $\psi^{AB}$ as the output of a channel, this translates into a condition on recovery of entanglement from the channel.

Protocols for both entanglement distillation and quantum communication over noisy channels are relatively simple to construct using the decoupling condition, which has led to large number of results~\cite{klesse_approximate_2007,hayden_decoupling_2008,hayden_random_2008,horodecki_quantum_2008,
klesse_random_2008,bjelakovi_entanglement_2009,buscemi_one-shot_2009}. Moreover, decoupling also offers an easy approach to state merging, the process by which local operations and classical communication from $A$ to $B$ leave $B$ with the purification of $R$~\cite{horodecki_partial_2005,horodecki_quantum_2007}. Building up from state merging and entanglement distillation, decoupling has been used to construct an entire ``family tree'' of quantum protocols~\cite{abeyesinghe_mother_2009}.

But what of the original intuition that quantum information is a sort of combination of classical ``amplitude'' and ``phase'' information? This picture stems from the digitization of quantum errors into classical amplitude and phase errors ($\sigma_z$ and $\sigma_x$ errors for qubit systems, respectively), and the construction of quantum error-codes by combining classical error-correcting codes for each of these error types. Many of the earliest quantum information processing protocols were built around this intuition~\cite{bennett_mixed-state_1996,deutsch_quantum_1996,lo_unconditional_1999,shor_simple_2000}, and it also relates quantum information processing to foundational questions, as the two types of errors are associated with complementary observables. 

However, protocols which exploit the digitization of quantum errors in this manner are suboptimal. Entanglement distillation with one-way communication provides a simple example. There, a large number of copies of an arbitrary quantum state $\psi^{AB}$ can be transformed into maximally entangled pairs $\ket{\Phi_d}^{AB}=\frac{1}{d}\sum_{z=0}^{d-1}\ket{z,z}^{AB}$, where $d={\rm dim}(A)$, using local operations and classical communication from $A$ to $B$ at rate $E_\rightarrow(\psi^{AB})=H(B)_\psi-H(AB)_\psi$, for $H$ the von Neumann entropy~\cite{devetak_distillation_2005}. But a protocol based on quantum error-correcting codes effectively sees only the part of $\psi^{AB}$ diagonal in the basis of Bell states $\ket{\beta_{jk}}^{AB}=\left(X^jZ^k\right)^B\ket{\Phi_d}^{AB}$, leading to a suboptimal rate. Here $X$ and $Z$ are 
defined by $Z=\sum_k \omega^k\ketbra{k}$ and $X=\sum_k\ket{k{+}1}\bra{k}$, for $\omega=e^{2\pi i/d}$. 

In this paper I show that, despite these obstacles, the intuition of quantum information as a combination of complementary classical information is as powerful as the decoupling approach. More specifically, I establish two new conditions for entanglement recovery from $\psi^{AB}$, both of which also imply $B$ alone contains the purification of $R$. Either can then be taken as the basis for constructing the protocols of the quantum family tree.

The first condition states that Bob, who holds $B$, can perform a unitary operation on $U^{BCD}$ on $B$ and ancillary systems $C$ and $D$ and approximately recover entanglement in the form of $\ket{\Phi_d}^{AB}$ when, using measurements $\mathcal{M}_X$ or $\mathcal{M}_Z$, he can approximately predict the outcome of measuring the two complementary observables $X^A$ and $Z^A$ on $A$, held by Alice. Fig.~\ref{fig:entdec} shows the construction of $U^{BCD}$ from $\mathcal{M}_X$ and $\mathcal{M}_Z$.
\newcommand{\ctrol}{\fill(0,0.1) circle(2pt);}
\newcommand{\target}{\draw(0,0.1) circle(4pt);}
\def\vgap{.75}
\def\hgap{1}
\def\hgapfudge{.25}
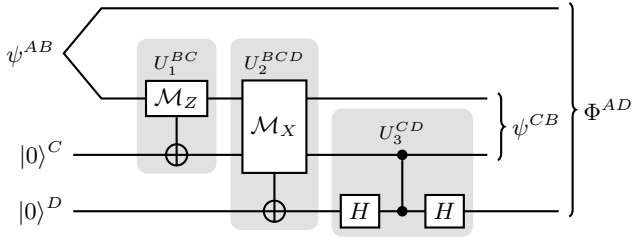
\begin{figure}[h!]
\begin{center}
\begin{tikzpicture}[thick]
\tikzstyle{empty} = [inner sep=1pt,outer sep=1pt]
\tikzstyle{gate} = [fill=white, draw]
\tikzstyle{ctrl} = [fill,shape=circle,minimum size=4pt,inner sep=0pt,outer sep=0pt]
\tikzstyle{targ} = [draw,shape=circle,minimum size=8pt,inner sep=0pt,outer sep=0pt]

\node[anchor=east] at (-0.5*\hgap,0.8*\vgap) (ab0) {$\psi^{AB}$};
\node[anchor=east] at (-0.375*\hgap,-\vgap) (cl) {$\ket{0}^C$};
\node[anchor=east] at (-0.375*\hgap,-2*\vgap) (dl) {$\ket{0}^D$};

\node[empty] at (6.15*\hgap,1.6*\vgap) (ar) {};
\node[empty] at (5.2*\hgap,0) (br) {};
\node[empty] at (5.2*\hgap,-\vgap) (cr) {};
\node[empty] at (6.15*\hgap,-2*\vgap) (dr) {};


\draw (br.west) -- (0,0) -- (-0.5*\hgap,0.8*\vgap) -- (0,1.6*\vgap) -- (ar.west);
\draw (cl.east) -- (cr.west);
\draw (dl.east) -- (dr.west);


\node[gate] at (\hgap,0) (mz) {$\mathcal{M}_Z$};
\node[targ] at (\hgap,-\vgap) (mzt) {};
\node[above=0 of mz,empty] (mzl) {\scriptsize $U_1^{BC}$};

\node[gate,minimum height=35pt] at (2*\hgap+\hgapfudge+.05,-.5*\vgap) (mx) {$\mathcal{M}_X$};
\node[targ] at (2*\hgap+\hgapfudge+.05,-2*\vgap) (mxt) {};
\node[above=0 of mx,empty] (mxl) {\scriptsize $U_2^{BCD}$};
\node[gate] at (3*\hgap+1.75*\hgapfudge,-2*\vgap) (h1) {$H$};

\node[ctrl] at (4*\hgap,-\vgap) (ctrlc) {};
\node[ctrl] at (4*\hgap,-2*\vgap) (ctrld) {};
\node[above=0 of ctrlc,empty] (ctrlcl) {\scriptsize $U_3^{CD}$};
\node[gate] at (5*\hgap-1.75*\hgapfudge,-2*\vgap) (h2) {$H$};



\draw (mz.south) -- (mzt.south);
\draw (mx.south) -- (mxt.south);
\draw (ctrlc) -- (ctrld);

\draw[decorate,decoration={brace},thick] (br.north east) to node[midway,right,inner sep=5pt] {$\psi^{CB}$} (cr.south east);
\draw[decorate,decoration={brace},thick] (ar.north east) to node[midway,right,inner sep=5pt] {$\Phi^{AD}$} (dr.south east);

\begin{pgfonlayer}{background}
\node [fill=black!12!white,rounded corners, dashed,fit=(h1) (ctrlc) (h2) (ctrlcl)] {};
\node [fill=black!12!white,rounded corners,fit=(mz) (mzt) (mzl)] {};
\node [fill=black!12!white,rounded corners,fit=(mx) (mxt) (mxl)] {};
\end{pgfonlayer}

\end{tikzpicture}
\caption{\label{fig:entdec} The quantum circuit enabling entanglement recovery from a bipartite state $\psi^{AB}$ by Bob, when he can approximately predict measurement of either conjugate observable $X$ or $Z$ by Alice. It proceeds in three steps, indicated in gray. First, Bob coherently performs the measurement $\mathcal{M}_Z$ allowing him to predict $Z$, storing the result in auxiliary system $C$ (unitary $U_{1}^{BC}$). Next, he coherently performs the measurement $\mathcal{M}_X$ allowing him to predict $X$, storing the result in auxiliary system $D$ (unitary $U_2^{BCD}$); knowledge of $Z$, now stored in $C$, may be necessary for this task. Finally, to recover a maximally entangled state in system $D$, he applies a controlled phase gate, with $D$ in the $x$ basis (unitary $U_3^{CD}$). This procedure also leaves Bob holding the original input state $\psi^{AB}$ in systems $C$ and $B$.}
\end{center}
\vspace{-14pt}
\end{figure}


The second condition shows that the decoupling condition extends to classical complementary observables in that entanglement is recoverable from $\psi^{AB}$ when, roughly speaking, the purifying system $R$ cannot be used to predict either of Alice's two measurements. This is accomplished by appealing to a recent result linking $R$'s ability to predict $X^A$ ($Z^A$) with Bob's ability to predict the complementary $Z^A$ ($X^A$)~\cite{renes_duality_2010}, and then using the recovery operation from the first condition. 

The two conditions for approximate quantum error correction 
are stated and proven more precisely as Theorems 1 and 2. Afterwards, I discuss their applications and relation to other work. 

\emph{Entanglement Recovery.}---Before delving into the rigorous results, it is worth elaborating on an important aspect of this approach. As shown schematically in Fig.~\ref{fig:entdec}, the recovery operator is constructed by using Bob's measurements $\mathcal{M}_Z$ and $\mathcal{M}_X$ in sequence. The first measurement  creates a local (approximate) copy of the $Z^A$ observable in an ancillary system $C$, in the following sense. Without loss of generality, the initial state is of the form 
\begin{align}
\label{eq:geninput}
\ket{\psi}^{ABR}=\sum_{z=0}^{d-1} \sqrt{p_z}\ket{z}^A\ket{\varphi_z}^{BR},
\end{align}
for some probability distribution $p_z$ and normalized states $\ket{\varphi_z}^{BR}$. If the first measurement were perfect, it would create the state 
\begin{align}
\label{eq:zextension}
\ket{\psi_Z}^{ABCR}=\sum_{z=0}^{d-1}\sqrt{p_z}\ket{z}^A\ket{z}^C\ket{\varphi_z}^{BR},
\end{align}
so that $C$ is just a copy of $A$ in the $Z^A$ basis. This state will be referred to as the \emph{$Z^A$-extension} of $\psi^{AB}$ to $C$. 

The crucial point is then that it is not strictly necessary for the second measurement to be made on $B$ alone. Instead, $\mathcal{M}_X$ could also involve $C$. And indeed, this turns out to be critical to the optimality of the present approach, because it is the means by which Bob can exploit possible correlations between $Z^A$ and $X^A$, as well as avoid the problems associated with digitization as mentioned above.

The same point is also critical to the second condition. Simply requiring the environment $R$ itself to be uncorrelated with the $X^A$ and $Z^A$ outcomes is not enough to ensure entanglement between $A$ and $B$. For a very simple example, consider the tripartite qubit state $\ket{\psi}^{ABR}=\frac{1}{\sqrt{2}}\left(\ket{0_y,0_y,0_y}+\ket{1_y,1_y,1_y}\right)^{ABR}$, where $\ket{k_y}$ is the eigenstate of $\sigma_y=i\sigma_x\sigma_z$ with eigenvalue $(-1)^k$. Although $R$ is completely uncorrelated with measurement of both $X=\sigma_x$ and $Z=\sigma_z$ on Alice's system, $\psi^{AB}$ is not a maximally entangled state. 
As we shall see, one way to fix this is to strengthen the decoupling condition for one of the observables, say $X^A$, by requiring that even measurement of $R$ and $C$ cannot predict $X^A$, where again $C$ comes from the $Z^A$ extension of $\psi^{AB}$. 

Now we are ready to proceed to the precise formulation of the results. To do so, we first need to properly define the various ``approximate'' notions which will be used. For approximate entanglement, a good option is to use the trace distance between the actual state $\psi^{AB}$ and the ideal $\Phi_d^{AB}$: If $\left\|\psi^{AB}-\Phi_d^{AB}\right\|_1$ is small, $\psi^{AB}$ is approximately entangled. Here $\left\|M\right\|_1={\rm Tr}\sqrt{M^\dagger M}$ for any operator $M$. 

The predictability of an observable $Z^A$ given measurement of a system $B$, which is used in the first condition, can be quantified by the maximum guessing probability over all measurements $\mathcal{M}_Z$ with elements $\Lambda_z^B$:
\begin{align}
p_{\rm guess}(Z^A|B)_\psi=\max_{\mathcal{M}_Z}\sum_{z=0}^{d-1} p_z {\rm Tr}\left[\Lambda^B_z\varphi^B_z\right],
\end{align}
using the general form of the state $\psi^{AB}$ from (\ref{eq:geninput}). 

Finally, for the second condition we shall need to quantify the extent to which an observable $Z^A$ is both completely unknown to $R$ and uniformly distributed. For a general state $\psi^{AR}$ the ideal case would therefore be $\tfrac{1}{d}\mathbbm{1}^A\otimes \psi^R$, and the fidelity of the two turns out to be a convenient quantity. We define
\begin{align}
p_{\rm secure}(Z^A|B)_\psi=\sum_{z=0}^{d-1}\sqrt{\tfrac{p_z}{d}}F(\varphi_z^R,\psi^R),
\end{align}
where $F(\rho,\sigma)=\left\|\sqrt{\rho}\sqrt{\sigma}\right\|_1$. 

With these definitions in place, we now turn to the first condition.
\begin{theorem}
\label{thm:entdec}
Given a state $\ket{\psi}^{ABR}$ and $Z^A$-extension $\ket{\psi_Z}^{ABCR}$, with $d={\rm dim}(A)$, suppose $p_{\rm guess}(Z^A|B)_\psi>1-\epsilon_z$ and $p_{\rm guess}(X^A|BC)_{\psi_Z}>1-\epsilon_x$. Then there exists an isometry $U^{B\rightarrow BCD}$ such that 
\[\left\|\ket{\Phi_d}^{AD}\ket{\psi}^{CBR}-U^{B\rightarrow BCD}\ket{\psi}^{ABR}\right\|_1\leq \sqrt{2\epsilon_x}+\sqrt{2\epsilon_z}.\] 
\end{theorem}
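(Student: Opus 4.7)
\emph{Proof plan.} The idea is to implement the three-stage circuit of Fig.~\ref{fig:entdec} and bound the error introduced at each stage by the triangle inequality, noting that for pure states the norm in the theorem statement reduces to the Euclidean vector norm. Starting from $\ket{\psi}^{ABR}$ in the form (\ref{eq:geninput}), Stage 1 is a Naimark dilation of the optimal $Z^A$-guessing POVM $\{\Lambda_z^B\}$: set $V^{B\to BC}\ket{\varphi}^B=\sum_z\sqrt{\Lambda_z}^B\ket{\varphi}^B\ket{z}^C$, so that the overlap with the ideal $Z^A$-extension is
\begin{align}
\bracket{\psi_Z}{V\psi}=\sum_z p_z\bra{\varphi_z}\sqrt{\Lambda_z}^B\ket{\varphi_z}\geq\sum_z p_z\,{\rm Tr}[\Lambda_z^B\varphi_z^B]>1-\epsilon_z,
\end{align}
using the operator inequality $\sqrt{\Lambda}\geq\Lambda$ valid whenever $0\leq\Lambda\leq\mathbbm{1}$. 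Translating this overlap into Euclidean distance gives $\|V\ket{\psi}-\ket{\psi_Z}\|_1\leq\sqrt{2\epsilon_z}$.

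Stage 2 is entirely analogous but carried out on the ideal $Z^A$-extension. Expanding $\ket{\psi_Z}^{ABCR}=\sum_x\sqrt{q_x}\ket{\tilde{x}}^A\ket{\chi_x}^{BCR}$ in the $X^A$ eigenbasis and letting $\{\Gamma_x^{BC}\}$ achieve $p_{\rm guess}(X^A|BC)_{\psi_Z}$, the Naimark dilation $W^{BC\to BCD}\ket{\zeta}^{BC}=\sum_x\sqrt{\Gamma_x}^{BC}\ket{\zeta}^{BC}\ket{\tilde{x}}^D$ satisfies $\|W\ket{\psi_Z}-\ket{\psi_{Z,X}}\|_1\leq\sqrt{2\epsilon_x}$, where $\ket{\psi_{Z,X}}=\sum_x\sqrt{q_x}\ket{\tilde{x}}^A\ket{\tilde{x}}^D\ket{\chi_x}^{BCR}$ denotes the $X^A$-extension of $\psi_Z^{ABC}$ onto $D$. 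Since $W$ is an isometry, composition with the Stage 1 bound yields $\|WV\ket{\psi}-\ket{\psi_{Z,X}}\|_1\leq\sqrt{2\epsilon_z}+\sqrt{2\epsilon_x}$.

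The final stage is exact. Fourier expansion of $\ket{\tilde{x}}^A\ket{\tilde{x}}^D$ in the computational basis rewrites
\begin{align}
\ket{\psi_{Z,X}}=\tfrac{1}{\sqrt{d}}\sum_{a,z}\sqrt{p_z}\ket{a}^A\ket{z{-}a}^D\ket{z}^C\ket{\varphi_z}^{BR},
\end{align}
so the $C$-controlled shift $\ket{z}^C\ket{b}^D\mapsto\ket{z}^C\ket{z{-}b}^D$ realized by the Hadamard/controlled-phase/Hadamard block $U_3^{CD}$ of Fig.~\ref{fig:entdec} carries this state exactly onto $\ket{\Phi_d}^{AD}\ket{\psi}^{CBR}$. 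Setting $U^{B\to BCD}=U_3WV$ and using unitarity of $U_3$ delivers the claimed bound. The main technical point is the translation from guessing probability to vector overlap, handled at each stage by the operator inequality $\sqrt{\Lambda}\geq\Lambda$ for sub-identity effects; the rest is an algebraic verification that the Fourier identity underlying the $CD$ block converts the nested $Z^AX^A$-extension into a product of maximal entanglement on $AD$ with the original purification on $CBR$.
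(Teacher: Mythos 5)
Your proof is correct and follows essentially the same route as the paper's: Naimark-dilate the optimal $Z^A$ and $X^A$ measurements in sequence, use $\sqrt{\Lambda}\geq\Lambda$ to lower-bound each overlap with the corresponding ideal extension, finish with an exact $CD$ unitary, and combine via isometric invariance and the triangle inequality. The only differences are cosmetic (you carry out the final Fourier algebra in the computational basis and use the opposite sign convention for the $D$ register, so your $U_3$ is a controlled reflect-shift rather than the paper's controlled shift).
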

As discarding a system can only decrease the guessing probability, we immediately have the corollary that  $p_{\rm guess}(Z^A|B)_\psi>1-\epsilon_z$ and $p_{\rm guess}(X^A|B)_\psi>1-\epsilon_x$ implies the same result.
\begin{proof}
Call the measurement maximizing the $Z^A$ guessing probability $\Lambda^B_z$, and start by implementing this measurement coherently with an isometry $U_1^{B\rightarrow BC}$, storing the result of the measurement in system $C$. This produces the state $\ket{\psi_1}^{ABCR}\equiv U_1^{B\rightarrow BC}\ket{\psi}^{ABR}$, with without loss of generality takes the form 
\begin{align}
\ket{\psi_1}^{ABCR}=\sum_{z,z'}\sqrt{p_z}\ket{z}^A\ket{z'}^C\sqrt{\Lambda^B_{z'}}\ket{\varphi_z}^{BR}.
\end{align}
Using the fact that $\sqrt{\Lambda}\geq \Lambda$ for $0\leq \Lambda\leq \mathbbm{1}$, the first condition directly implies that $\bracket{\psi_Z}{\psi_1}\geq 1-\epsilon_z$. 

Thus we may pretend that the output of the first step is $\ket{\psi_Z}^{ABCR}$. Converting the $A$ system to the $X^A$ basis, with elements $\cket{x}=\frac{1}{\sqrt{d}}\sum_z \omega^{xz}\ket{z}$, gives $\ket{\psi_Z}=\tfrac{1}{\sqrt{d}}\sum_x \cket{x}^A \left(Z^{-x}\right)^C\ket{\psi}^{CBR}$, where $\ket{\psi}^{CBR}$ is just $\ket{\psi}^{ABR}$ with system $A$ replaced with $C$. 
Now perform the measurement for guessing $X^A$ coherently with the isometry $U_2^{BC\rightarrow BCD}$, storing the result in $D$; call the output $\ket{\psi_Z'}^{ABCDR}\equiv U_2^{BC\rightarrow BCD}\ket{\psi_Z}^{ABCR}$. The ideal output would be 
\begin{align}
\ket{\xi}^{ABCDR}=\tfrac{1}{\sqrt{d}}\sum_x \cket{x}^A\ket{-\widetilde{x}}^D \left(Z^{-x}\right)^C\ket{\psi}^{CBR},
\end{align}
and applying the same reasoning as for $U_1^{B\rightarrow BC}$, it follows that $\bracket{\xi}{\psi'_Z}\geq 1-\epsilon_x$. Furthermore, a controlled-phase operation $U_3^{CD}=\sum_x \ketbra{\widetilde{x}}^{D}\otimes \left(Z^x\right)^C$ applied to $\ket{\xi}$ gives $\ket{\Phi_d}^{AD}\ket{\psi}^{CBR}$, since $\ket{\Phi_d}^{AB}=\frac{1}{\sqrt{d}}\sum_x \cket{x}^A\ket{-\widetilde{x}}^B$.

Finally, define $U^{B\rightarrow BCD}=U_3^{CD}U_2^{BC\rightarrow BCD}U_1^{B\rightarrow BC}$ and compute the trace distance between $\ket{\Phi_d}^{AD}\ket{\psi}^{CBR}$ and $U^{B\rightarrow BCD}\ket{\psi}^{ABR}$. When the fidelity of two states exceeds $1-\epsilon$, their trace distance is not larger than $\sqrt{2\epsilon}$. Therefore,  $\left\|\ket{\Phi_d}^{AD}\ket{\psi}^{CBR}-U_3U_2\ket{\psi_Z}^{ABCR}\right\|_1\leq \sqrt{2\epsilon_x}$  by unitary invariance of the trace distance. Since $\left\|\ket{\psi_Z}^{ABCR}-U_1\ket{\psi}^{ABR}\right\|_1\leq \sqrt{2\epsilon_z}$,  using the triangle inequality completes the proof. \end{proof}

Note that in the proof, the operators $U_i$ are formulated as isometries from one state space to another, whereas in Fig.~\ref{fig:entdec} they take the form of unitaries. The latter is accomplished by explicitly including the ancilla systems $C$ and $D$ from the start, which is avoided in the proof to reduce clutter. 

Using the recent result that $X^A$ being decorrelated from $RC$ is dual to $Z^A$ being correlated with $B$, the second condition follows immediately from the first.  
\begin{theorem}
\label{thm:decouple}
Suppose $p_{\rm secure}(X^A|CR)_{\psi_Z}>1-\epsilon_x$ and $p_{\rm secure}(Z^A|R)_\psi>1-\epsilon_z$ for a state $\ket{\psi}^{ABR}$ and $Z^A$-extension $\ket{\psi_Z}^{ABCR}$, with $d={\rm dim}(A)$. Then there exists an isometry $U^{B\rightarrow BCD}$ such that 
\[\left\|\ket{\Phi_d}^{AD}\ket{\psi}^{CBR}-U^{B\rightarrow BCD}\ket{\psi}^{ABR}\right\|_1\leq \sqrt[4]{8\epsilon_x}+\sqrt[4]{8\epsilon_z}.\] 
\end{theorem}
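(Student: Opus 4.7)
The plan is to reduce Theorem~\ref{thm:decouple} to Theorem~\ref{thm:entdec} by appealing to the complementary-observable duality of Ref.~\cite{renes_duality_2010}. That duality, applied to a pure tripartite state $\ket{\psi}^{A\tilde B\tilde E}$, converts a lower bound on the security $p_{\rm secure}(X^A|\tilde E)$ of one observable from the environment into a lower bound on the predictability $p_{\rm guess}(Z^A|\tilde B)$ of the complementary observable from Bob, and symmetrically with $X\leftrightarrow Z$. I would invoke the duality twice, once for each hypothesis of the theorem, by choosing two different bipartitions of the purifying system of the pure state $\ket{\psi_Z}^{ABCR}$.

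First, treating $B$ as Bob's system and $CR$ as the environment, duality converts the hypothesis $p_{\rm secure}(X^A|CR)_{\psi_Z}\ge 1-\epsilon_x$ into a lower bound on $p_{\rm guess}(Z^A|B)_{\psi_Z}$. Because $Z$-basis measurements on $A$ are sensitive only to the $Z$-diagonal blocks of $\rho^{AB}$, and those blocks agree in $\psi$ and in $\psi_Z$, the bound carries over to $p_{\rm guess}(Z^A|B)_{\psi}$. Second, treating $BC$ as Bob's system and $R$ as the environment, duality converts a lower bound on $p_{\rm secure}(Z^A|R)_{\psi_Z}$ into a lower bound on $p_{\rm guess}(X^A|BC)_{\psi_Z}$; unpacking the definition of $p_{\rm secure}$ one checks that $p_{\rm secure}(Z^A|R)_{\psi_Z}=p_{\rm secure}(Z^A|R)_{\psi}$, since the relevant $R$-marginals $\varphi_z^R$ and $\psi^R$ are unchanged when the coherent copy $C$ is appended.

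Having translated both hypotheses into guessing bounds on the appropriate states, I would plug them directly into Theorem~\ref{thm:entdec}. For the arithmetic to land on $\sqrt[4]{8\epsilon_x}+\sqrt[4]{8\epsilon_z}$, the duality must enter in the quantitative form $p_{\rm guess}\ge 1-\sqrt{2(1-p_{\rm secure})}$, so that the effective parameters entering Theorem~\ref{thm:entdec} are $\sqrt{2\epsilon_x}$ and $\sqrt{2\epsilon_z}$; then $\sqrt{2\cdot\sqrt{2\epsilon}}=\sqrt[4]{8\epsilon}$ gives exactly the bound claimed. The main obstacle is pinning down this precise functional form of the duality: I would expect it to follow from expressing $p_{\rm secure}$ as the fidelity $F(\rho^{A\tilde E},\tfrac{1}{d}\mathbbm{1}^A\otimes\rho^{\tilde E})$, invoking the Uhlmann-type statement in Ref.~\cite{renes_duality_2010} to rewrite this as a fidelity between a purification and a $Z^A$-extended target state, and then using the standard $1-F\le\sqrt{2(1-F^2)^{1/2}}$-type conversion to pass to a guessing probability on the complementary side. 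Once this constant is in place, the proof is a one-line concatenation of duality with Theorem~\ref{thm:entdec}.
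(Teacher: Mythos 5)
Your proposal matches the paper's proof essentially step for step: both invoke the duality of Ref.~\cite{renes_duality_2010} twice (on the bipartitions $B|CR$ and $BC|R$ of the purification of $\psi_Z$) in the quantitative form $p_{\rm guess}>1-\sqrt{2\epsilon}$, note the identifications of the relevant quantities between $\psi$ and $\psi_Z$, and then feed the resulting parameters $\sqrt{2\epsilon_x}$, $\sqrt{2\epsilon_z}$ into Theorem~\ref{thm:entdec} to obtain $\sqrt[4]{8\epsilon_x}+\sqrt[4]{8\epsilon_z}$. The only cosmetic difference is that the paper records the precondition $p_{\rm guess}(Z^A|BC)_{\psi_Z}=1$ before citing the duality, a detail subsumed in your setup.
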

\begin{proof}
Observe from Eq.~(\ref{eq:zextension}) that $p_{\rm guess}(Z^A|BC)_{\psi_Z}=1$, or equivalently, $H(Z^A|BC)=0$, using the conditional von Neumann entropy. Theorem 2 of~\cite{renes_duality_2010} then states that $p_{\rm secure}(X^A|CR)_{\psi_Z}>1-\epsilon_x$ implies $p_{\rm guess}(Z^A|B)_{\psi_Z}>1-\sqrt{2\epsilon_x}$. Since 
$p_{\rm guess}(Z^A|B)_{\psi_Z}=p_{\rm guess}(Z^A|B)_{\psi}$, we have the first premise of Theorem 1, with approximation parameter $\sqrt{2\epsilon_x}$.  
By the same reasoning, $p_{\rm secure}(Z^A|R)_{\psi}>1-\epsilon_z$ implies $p_{\rm guess}(X^A|BC)_{\psi_Z}>1-\sqrt{2\epsilon_z}$. Applying Theorem~\ref{thm:entdec} then gives the desired result. 
\end{proof}

\emph{Discussion and Applications.---} The first condition was implicitly used in~\cite{renes_physical_2008} to construct optimal entanglement distillation protocols using one-way classical communication and extended in~\cite{boileau_optimal_2009} to the task of state merging. Results for the ``mother of all protocols'', the fully-quantum Slepian-Wolf protocol of~\cite{abeyesinghe_mother_2009}, will be presented elsewhere, but may be anticipated from~\cite{boileau_optimal_2009}.

The general approach in~\cite{renes_physical_2008,boileau_optimal_2009} is to first treat the simpler problems of enabling Bob to predict Alice's $Z^A$ and $X^A$ outcomes separately, and then figure out how to combine these later. In fact, each is just the task of classical data compression with quantum side information at the decoder; Alice's classical message to Bob is the compressed version of the classical random variable $Z^A$ ($X^A$), and system $B$ is the quantum side information that Bob, the decoder, uses to reconstruct $Z^A$ ($X^A$). 

Constructing optimal protocols for this task was done in~\cite{devetak_classical_2003}. To build an entanglement distillation protocol these two protocols simply need to be combined in some way, and fortunately this is possible with linear error-correcting codes. In the resulting entanglement distillation or state merging protocol Alice makes use of Calderbank-Shor-Steane (CSS) quantum error-correcting codes~\cite{calderbank_good_1996,steane_multiple-particle_1996} while Bob uses the decoder described here, built from the measurements for the $Z^A$ and $X^A$ reconstruction protocols. Note that the protocol proceeds in two steps, just like Theorem 1, so Bob's $X^A$ measurement can take advantage of the ancillary system $C$, and must do so to make an optimal protocol.

In this approach, the observables $Z^A$ and $X^A$ used in the eventual entanglement decoder are closely related (by the linear stabilizers of the CSS code) to the $Z^A$ and $X^A$ operators of the input state to the protocol. However, there is no need for this to be the case, all that is necessary is to find \emph{some} observables $X^A$ and $Z^A$ which Bob can predict. Hayden, Shor, and Winter give another means for doing this  in the course of proving the quantum noisy channel coding theorem in~\cite{hayden_random_2008}. Their approach is to to build up a subspace by picking vectors according to a Gaussian distribution. Since $X^A$ and $Z^A$ are related by the Fourier transform $F=\frac{1}{\sqrt{d}}\sum_{jk}\omega^{jk}\ket{j}\bra{k}$, this has the advantage that $p_{\rm guess}(Z^A|B)=p_{\rm guess}(X^A|B)$ automatically. It should be noted that their
proof of the noisy channel coding theorem could be simplified by using the entanglement decoder described here, as they take a more complicated route via decoupling. 

Finally, the optimality of the protocols mentioned above pertains to the rate at which entangled pairs can be produced in the limit of infinitely many copies of the input resource $\psi^{AB}$ or the necessary rate of classical communication, not the difficultly of implementing the decoding map. The entanglement recovery conditions established here give a means of investigating suboptimal but more efficient recovery maps by reducing the fully quantum problem into easier classical-quantum pieces.    

\emph{Summary and Relation to Previous Work.---} This paper provides two new characterizations of when entanglement recovery is possible from a bipartite state by local operations, and strengthens the intuition that quantum information is a fusion of classical information pertaining to two complementary observables.
Work on approximate quantum error correction falls into two broad camps, investigations into when entanglement recovery is possible and constructions of recovery operators. Furthermore, the former category contains both information-theoretic as well as algebraic approaches. This paper is of the information-theoretic variety, as was the original result~\cite{schumacher_approximate_2002}, showing that negligible loss of a quantity called coherent information implies entanglement recovery. 

Recently, Buscemi has extended this to negligible loss of entanglement of formation~\cite{buscemi_entanglement_2008}. Devetak and Winter implicitly used a condition similar to the two presented here in~\cite{devetak_private_2005,devetak_distillation_2005}, essentially a hybrid of them which delivers the same conclusion of Theorems 1 and 2 from the premises $p_{\rm guess}(Z^A|B)_\psi>1-\epsilon_z$ and $p_{\rm secure}(Z^A|R)_\psi\geq 1-\tfrac{1}{2}\epsilon_x^2$. That approach is closely related to quantum cryptography, as the two conditions state that $Z^A$ is an approximate secret key, a uniform random variable shared by Alice and Bob but uncorrelated with any other system.

On the algebraic side, Klesse~\cite{klesse_approximate_2007} adapted the Schumacher and Westmoreland result to find conditions for approximate quantum error correction and used this to give a proof of the quantum noiseless coding theorem~\cite{klesse_random_2008}. B\'eny and Oreshkov~\cite{bny_general_2009} link the optimal recoverability from a given quantum channel to the dual problem of recoverability from the complementary channel, and use this to give an approximate form of the original Knill-Laflamme conditions~\cite{knill_theory_1997} on perfect recovery~\footnote{By the Stinespring dilation, any channel $\mathcal{E}$ can be thought of as a unitary acting on a larger state space, followed by tracing out these extra degrees of freedom; the complementary channel $\widehat{\mathcal{E}}$ results when tracing out the original degrees of freedom.}. They also use this to construct recovery maps in certain cases. Independently, Ng and Mandayam~\cite{ng_simple_2009} derived similar different conditions on approximate error-correction from their use of the transpose map as a recovery operator. 

Because the decoupling argument only asserts a suitable recovery map exists, it is difficult to connect the usual information theoretic approaches with actual constructions for approximate quantum error correcting codes as is common in the algebraic approach. The first condition presented here does lend itself more readily to actual constructions, since one way to satisfy the two premises is to find measurements with high guessing probability. Moreover, the full quantum task is broken down into two easier classical-quantum tasks, and I hope that this work will help bridge the divide and spur new code constructions from information-theoretic reasoning.

I thank Mark M.\ Wilde and Andreas Winter for helpful discussions and the financial support of the Center for Advanced Security Research Darmstadt (CASED).
\bibliographystyle{apsrev}
\bibliography{caqec}

\end{document}